\newcommand{\ket}[1]{|#1\rangle}
\newcommand{\supp}{\operatorname{supp}}
\newtheorem{theorem}{Theorem}
\newtheorem{corollary}[theorem]{Corollary}
\begin{document}

\title{From Joint to Single-System $\psi$-Onticity \\ Without Preparation Independence} 

\author{Shan Gao
\\Research Center for Philosophy of Science and Technology, 
\\ Shanxi University, Taiyuan 030006, P. R. China
\\ E-mail: \href{mailto:gaoshan2017@sxu.edu.cn}{gaoshan2017@sxu.edu.cn}.}

\maketitle

\begin{abstract}
The Pusey–Barrett–Rudolph (PBR) theorem establishes $\psi$-onticity for individual quantum systems, but its standard formulation relies on the Preparation Independence Postulate (PIP). This has led to a prevalent view that rejecting PIP leaves open the possibility of $\psi$-epistemic models for individual systems. In this work, we show that this understanding is incomplete: once the PBR theorem establishes $\psi$-onticity for composite systems prepared in product states, the $\psi$-onticity of the individual subsystems follows directly from the tensor-product structure of quantum mechanics, without invoking PIP or any further auxiliary assumptions. This result removes a key auxiliary assumption from the PBR theorem, closes a persistent loophole for preserving $\psi$-epistemic models, and strengthens the conceptual foundations of $\psi$-ontology.
\end{abstract}

\section{Introduction}

The question of whether the quantum wave function corresponds to a physical property of an individual system, or merely represents an experimenter's knowledge, lies at the heart of interpreting quantum mechanics \cite{Author2017}. Within the ontological models framework \cite{Spekkens2010}, this distinction is formalized by the concepts of $\psi$-onticity---where distinct quantum states correspond to disjoint distributions over ontic states---and $\psi$-epistemicity, where such distributions may overlap. The Pusey--Barrett--Rudolph (PBR) theorem stands as a landmark result in this debate, widely regarded as establishing $\psi$-onticity for single quantum systems \cite{PBR2012}. However, its standard formulation relies on an auxiliary assumption known as the Preparation Independence Postulate (PIP), which posits that independently prepared systems possess uncorrelated ontic states.

This reliance on PIP has shaped the prevailing understanding of the theorem's scope. The PBR argument proceeds in two logical steps: first, it establishes $\psi$-onticity for composite systems prepared in distinct product states within the ontological models framework; second, PIP is invoked to extend this conclusion to the individual subsystems. Consequently, it has been widely held that rejecting PIP could permit $\psi$-epistemic models for single systems to survive, even while respecting the PBR result for composite systems. This view has motivated several proposed models that relax PIP in an attempt to retain $\psi$-epistemicity at the subsystem level \cite{Lewis2012, Aaronson2013, Leifer2014}. 

In this paper, we show that this understanding is incomplete. 
We demonstrate that once the $\psi$-onticity for product states is established, the $\psi$-onticity of the individual subsystem wave functions follows directly from the tensor-product structure of quantum mechanics, with no appeal to PIP or any other auxiliary assumptions. The key conceptual observation is that $\psi$-onticity itself has a powerful structural consequence: if the ontic supports for distinct pure states never overlap, then one can always reorganize the ontic space so that each ontic state carries a sharply defined label identifying the quantum state it is compatible with. This yields a natural decomposition of the ontic state into a ``$\psi$-component’’ and residual degrees of freedom or hidden variables, and guarantees that every $\psi$-ontic model is equivalent to one in which the preparation measure takes the form $\mu_\psi(\lambda_\psi,\eta)=\delta(\lambda_\psi-\psi)\,\nu_\psi(\eta)$. 
This is not an additional assumption: it is a representation theorem that follows directly from $\psi$-onticity rather than supplementing it.

By applying this representation to the $\psi$-ontic joint state $\psi_1 \otimes \psi_2$, we show that the $\psi$-components associated with the subsystems must themselves be uniquely fixed, rendering $\psi_1$ and $\psi_2$ each $\psi$-ontic individually. Hidden-variable correlations, however strong, cannot alter this conclusion. Our result closes the loophole often thought to arise if PIP is rejected: joint $\psi$-ontology already entails single-system $\psi$-ontology. The remainder of the paper develops this argument in detail, clarifies the status of models claiming to evade the PBR conclusion, and examines the implications for realist interpretations of the quantum state.

\section{The PBR Theorem: Joint $\psi$-Onticity and PIP}
\label{sec:pbr}

In ontological models of quantum mechanics, a physical system that can be assigned to a wave function or a pure state is described by an ontic state $\lambda \in \Lambda$. 
A pure quantum state $\ket{\psi}$ is said to be $\psi$-ontic if the epistemic distributions $\mu_\psi(\lambda)$ for distinct quantum states have disjoint supports.
That is, for $\ket{\psi}\neq\ket{\phi}$,
\begin{equation}
\mathrm{supp}(\mu_\psi) \cap \mathrm{supp}(\mu_\phi) = \varnothing.
\end{equation}
Disjointness means that the ontic state uniquely determines the quantum state: $\lambda$ never arises from two different preparations.
Hence the wave function is an objective physical property of the individual system.
If the supports overlap, then a single ontic state may arise from different quantum states; in this case the wave function is $\psi$-epistemic \cite{Leifer2014}.

\subsection{Structural Decomposition of $\psi$-Ontic Models}

A central observation underlying our analysis is that $\psi$-onticity has a strong structural consequence. Whenever the supports of $\{\mu_\psi\}$ are pairwise disjoint, the ontic space $\Lambda$ can be reorganized (up to sets of measure zero) into
\begin{equation}
\lambda = (\lambda_\psi, \eta),
\end{equation}
where
\begin{itemize}
\item $\lambda_\psi \in \mathbb{CP}^{d-1}$ is a sharp label identifying the prepared quantum state,
\item $\eta$ denotes any residual ontic degrees of freedom.
\end{itemize}
In this coordinate system, every $\psi$-ontic model takes the delta-function form
\begin{equation}
\mu_\psi(\lambda_\psi,\eta)
= \delta(\lambda_\psi - \psi)\, \nu_\psi(\eta).
\label{eq:psi-ontic-form}
\end{equation}
Here $\nu_\psi(\eta)$ is a normalized probability measure over $\eta$, and it may in general depend on $\psi$. 
This decomposition is not an additional assumption, but a representation theorem: pairwise disjoint supports necessarily imply that the quantum state is explicitly and sharply encoded inside the ontic state. A complete proof is provided in Appendix~\ref{app:representation}.

This formalizes what PBR explicitly state in their original paper:
\begin{quote}
``If the same [i.e. the distributions cannot overlap] can be shown for any pair of quantum states $\ket{\psi_0}$ and $\ket{\psi_1}$, then the quantum state can be inferred uniquely from $\lambda$. In this case, the quantum state is a physical property of the system.'' \cite{PBR2012}
\end{quote}
Their intended meaning is precisely the delta-function representation in Eq.~\eqref{eq:psi-ontic-form}. 
In their framework, if the ontic distributions corresponding to distinct quantum states are disjoint, then the label of the quantum state is uniquely determined by the ontic state, and the quantum state is a physical property of the system.  
By introducing the measurable map $f$ and defining $\lambda_\psi := f(\lambda)$, and separating the residual degrees of freedom into $\eta$ (see Appendix~\ref{app:representation}), we make this expression mathematically rigorous: the wave function is explicitly encoded in the ontic state, exactly as PBR describe when they state that ``every detail of the quantum state is `written into' the real physical state of affairs." \cite{PBR2012} In this sense, our delta-function representation is a formal expression of PBR’s original concept of $\psi$-onticity.

\subsection{Joint $\psi$-Onticity in the PBR Scenario}

The PBR theorem considers two independently preparable systems with Hilbert spaces $\mathcal{H}_1$ and $\mathcal{H}_2$, prepared in product states $\ket{\psi_1}\otimes\ket{\psi_2}$.
PBR prove that within the ontological models framework the joint quantum state is $\psi$-ontic: distinct product states
\begin{equation}
\ket{\psi_1\otimes\psi_2}\neq\ket{\phi_1\otimes\phi_2}
\end{equation}
yield disjoint epistemic distributions in the composite ontic space $\Lambda$.
By the representation theorem above, this implies
\begin{equation}
\mu_{\psi_1\otimes\psi_2}(\lambda)
= \delta(\lambda_\psi - \psi_1\otimes\psi_2)\,
\nu_{\psi_1\otimes\psi_2}(\eta).
\label{eq:joint-delta}
\end{equation}
This expression is not an assumption about the structure of the composite system.
It is simply the formal statement of what ``the joint state is $\psi$-ontic'' means.
Crucially, no assumption has been made about $\psi$-onticity of the subsystems individually.
Only the composite system’s $\psi$-onticity is used.

\subsection{The Role of PIP}

In the original PBR argument, the inference from joint $\psi$-onticity to single-system $\psi$-onticity relies on the Preparation Independence Postulate (PIP).\footnote{Note that in the PBR argument, the joint $\psi$-onticity is derived entirely at the level of the composite system and does not rely on preparation independence for the subsystems; this assumption is introduced only in the subsequent step that infers single-system $\psi$-onticity from joint $\psi$-onticity \cite{PBR2012,Leifer2014}.} PIP asserts that independently prepared systems possess uncorrelated ontic states, so that the epistemic distribution for a product preparation factorizes:
\begin{equation}
\mu_{\psi_1 \otimes \psi_2}(\lambda_1,\lambda_2)
= \mu_{\psi_1}(\lambda_1)\,\mu_{\psi_2}(\lambda_2).
\end{equation}
If a model permits single-system $\psi$-epistemicity---for example, if $\ket{0}$ and $\ket{+}$ overlap in ontic space---then PIP transmits these overlaps to the composite level. Under PIP, the four product-state preparations used in the PBR experiment, 
$\ket{0\otimes 0}$, $\ket{0\otimes +}$, $\ket{+\otimes 0}$, and $\ket{+\otimes +}$, necessarily yield joint distributions with nonzero mutual overlap. Such overlaps contradict the joint $\psi$-onticity required by the PBR measurement statistics. Consequently, within the original PBR framework, PIP is precisely the mechanism that rules out the possibility that subsystem overlaps could be ``washed away'' by correlations, thereby forcing single-system $\psi$-ontology (for further analysis of PIP see \cite{Schlosshauer2012, Schlosshauer2014, Leifer2014}).  

The widely received view holds that the PBR theorem’s joint $\psi$-onticity does not necessarily imply $\psi$-onticity for single systems if PIP is relaxed, allowing correlated ontic states across subsystems to permit $\psi$-epistemicity \cite{Lewis2012, Aaronson2013, Leifer2014}. Models like that of Lewis et al. suggest that $\psi$-epistemicity for single systems is possible by introducing such correlations, though they did not show that their model fully reproduces quantum mechanics’ entanglement measurement predictions \cite{Lewis2012}. 

\subsection{Our Contribution: Single-System $\psi$-Onticity from Joint $\psi$-Onticity without PIP}

The next section revisits and clarifies this commonly held view. 
We show that, given the PBR theorem’s first result that product states are jointly $\psi$-ontic and the tensor-product structure of quantum mechanics, one can derive single-system $\psi$-onticity without appealing to PIP or any further assumptions.
Correlations in the ontic state of the joint system do not rescue $\psi$-epistemic models once the structural form~\eqref{eq:psi-ontic-form} is correctly acknowledged as a consequence of joint $\psi$-onticity.
Thus, the standard PBR reasoning can be completed with strictly fewer assumptions than previously thought.

\section{A Direct Proof of Single-System $\psi$-Ontology}
\label{sec:proof}

We now prove that the individual subsystem wave functions
$\ket{\psi_1}\in\mathcal{H}_1$ and $\ket{\psi_2}\in\mathcal{H}_2$
are themselves $\psi$-ontic, using only two ingredients:
\begin{enumerate}
  \item PBR's result that the joint product state $\ket{\psi_1 \otimes \psi_2}$ is $\psi$-ontic~\cite{PBR2012}, and
  \item the tensor-product structure of the Hilbert space, which identifies $\ket{\psi_1}$ and $\ket{\psi_2}$ as the factors of the joint wave function.
\end{enumerate}
No preparation-independence assumption and no constraints on hidden-variable correlations are required.

\subsection{Consequence of joint $\psi$-onticity}

The PBR theorem implies that for every product state $\ket{\psi_1 \otimes \psi_2}$, the corresponding epistemic distribution on the composite ontic space $\Lambda$ has the form
\begin{equation}
\mu_{\psi_1\otimes\psi_2}(\lambda)
= \delta(\lambda_\psi - \psi_1\otimes\psi_2)\,
\nu_{\psi_1\otimes\psi_2}(\eta),
\label{eq:joint}
\end{equation}
where we have parameterized the ontic state as $\lambda = (\lambda_\psi,\eta)$ and $\lambda_\psi$ is the component that uniquely identifies the prepared joint quantum state (see Appendix~\ref{app:representation}).
Equation~\eqref{eq:joint} provides a clear mathematical formulation of joint $\psi$-onticity: no two distinct product states ever assign positive probability to the same value of $\lambda_\psi$.

\subsection{Tensor decomposition of the $\psi$-label}

Because the quantum state of the composite system factorizes as
\begin{equation}
\psi_1 \otimes \psi_2 \in \mathcal{H}_1 \otimes \mathcal{H}_2,
\end{equation}
the label $\lambda_\psi$ appearing in~\eqref{eq:joint} can itself be coordinatized in the natural tensor-product way:
\begin{equation}
\lambda_\psi = (\lambda_{\psi_1}, \lambda_{\psi_2})
\qquad\text{with}\qquad
\lambda_{\psi_1} \in \mathbb{CP}^{d_1-1},
\quad \lambda_{\psi_2} \in \mathbb{CP}^{d_2-1}.
\end{equation}
Since $\lambda_\psi = \psi_1 \otimes \psi_2$ almost everywhere under $\mu_{\psi_1\otimes\psi_2}$, the unique specification of the joint state forces the unique specification of its tensor factors:
\begin{equation}
\lambda_{\psi_1} = \psi_1,
\qquad
\lambda_{\psi_2} = \psi_2
\end{equation}
(up to sets of measure zero). Consequently, the delta distribution factorizes as
\begin{equation}
\delta(\lambda_\psi - \psi_1\otimes\psi_2)
= \delta(\lambda_{\psi_1} - \psi_1)\,
\delta(\lambda_{\psi_2} - \psi_2).
\label{eq:deltaprod}
\end{equation}
This factorization is \emph{not} an additional assumption about the ontic space; it is the canonical way of expressing the already-established joint $\psi$-onticity in the natural tensor coordinates dictated by the Hilbert-space factorization.

To make the above structural consequence fully rigorous, we provide a short argument by contradiction showing that $\mu_{\psi_1}$ and $\mu_{\psi_2}$ must each be delta distributions concentrated on the corresponding $\psi$-labels. 
Assume, for contradiction, that the subsystem distribution $\mu_{\psi_1}(\lambda_{\psi_1})$ is \emph{not} a delta function. Then its support contains at least two distinct values
\begin{equation}
\lambda_{\psi_1}^a \neq \lambda_{\psi_1}^b
\end{equation}
that both occur with positive probability under preparation of $\ket{\psi_1}$.
When the composite state $\ket{\psi_1 \otimes \psi_2}$ is prepared, the corresponding ontic state of the full system could then be either
\begin{equation}
\lambda^{(1)} = \bigl( (\lambda_{\psi_1}^a, \lambda_{\psi_2}^a), \eta^{(1)} \bigr),
\qquad
\lambda^{(2)} = \bigl( (\lambda_{\psi_1}^b, \lambda_{\psi_2}^b), \eta^{(2)} \bigr),
\end{equation}
where $\lambda_{\psi_2}^{a,b}$ and $\eta^{(1),(2)}$ are the $\psi$-related ontic states of subsystem 2 and the residual degrees of freedom of the composite system. Both $\lambda^{(1)}$ and $\lambda^{(2)}$ occur with positive probability under $\mu_{\psi_1 \otimes \psi_2}$.
However, joint $\psi$-onticity requires that for the global state
\begin{equation}
\lambda_\psi^{(1)} = \lambda_\psi^{(2)} = \psi_1 \otimes \psi_2.
\end{equation}
By definition of the $\psi$-related component (as constructed in Theorem 1),
\begin{equation}
\lambda_\psi^{(1)} = (\lambda_{\psi_1}^a, \lambda_{\psi_2}^a), 
\qquad \lambda_\psi^{(2)} = (\lambda_{\psi_1}^b, \lambda_{\psi_2}^b).
\end{equation}
Thus
\begin{equation}
(\lambda_{\psi_1}^a, \lambda_{\psi_2}^a) = (\lambda_{\psi_1}^b, \lambda_{\psi_2}^b),
\end{equation}
and hence
\begin{equation}
\lambda_{\psi_1}^a = \lambda_{\psi_1}^b,
\qquad
\lambda_{\psi_2}^a = \lambda_{\psi_2}^b,
\end{equation}
contradicting the assumption $\lambda_{\psi_1}^a \neq \lambda_{\psi_1}^b$.
Therefore,
\begin{equation}
\mu_{\psi_1}(\lambda_{\psi_1}) = \delta(\lambda_{\psi_1} - \psi_1), 
\qquad \mu_{\psi_2}(\lambda_{\psi_2}) = \delta(\lambda_{\psi_2} - \psi_2).
\end{equation}
This establishes that~\eqref{eq:deltaprod} is the only consistent form of the composite distribution.

Substituting~\eqref{eq:deltaprod} into~\eqref{eq:joint} yields
\begin{equation}
\mu_{\psi_1 \otimes \psi_2}(\lambda)
= \delta(\lambda_{\psi_1} - \psi_1) \;
  \delta(\lambda_{\psi_2} - \psi_2) \;
  \nu_{\psi_1 \otimes \psi_2}(\eta).
\label{eq:jointfactor}
\end{equation}
Hidden variables and their possible correlations are entirely captured by $\nu_{\psi_1 \otimes \psi_2}(\eta)$; they play no role in the logical necessity of the $\psi$-delta structure.
Joint $\psi$-onticity prohibits multiple $\psi$-labels for the same quantum state, even though it permits epistemic uncertainty in the remaining degrees of freedom.

\subsection{Extraction of the subsystem distributions}

Consider subsystem~1. Its epistemic distribution is the marginal
\begin{equation}
\mu_{\psi_1}(\lambda_{\psi_1})
= \int d\lambda_{\psi_2}\, d\eta\,
\mu_{\psi_1\otimes\psi_2}(\lambda).
\end{equation}
Inserting~\eqref{eq:jointfactor} gives directly
\begin{equation}
\mu_{\psi_1}(\lambda_{\psi_1})
= \delta(\lambda_{\psi_1} - \psi_1).
\label{eq:sub1}
\end{equation}
The identical argument for subsystem~2 yields
\begin{equation}
\mu_{\psi_2}(\lambda_{\psi_2})
= \delta(\lambda_{\psi_2} - \psi_2).
\label{eq:sub2}
\end{equation}
Thus, the $\psi$-component of each subsystem’s ontic state is uniquely and sharply fixed by the corresponding subsystem wave function. Any additional hidden variables (possibly correlated between subsystems) are confined to the residual variable $\eta$ and do not affect the $\psi$-label.

\subsection{Single-system $\psi$-ontology}

To state the conclusion explicitly: suppose $\ket{\psi_1} \neq \ket{\phi_1}$.
Then~\eqref{eq:sub1} implies
\begin{equation}
\mathrm{supp}(\mu_{\psi_1}) = {\lambda_{\psi_1} = \psi_1},
\qquad
\mathrm{supp}(\mu_{\phi_1}) = {\lambda_{\phi_1} = \phi_1}.
\end{equation}
These supports are disjoint because $\psi_1 \neq \phi_1$.
Hence subsystem~1 is $\psi$-ontic.
The same reasoning applies independently to subsystem~2.

This completes the proof: joint $\psi$-onticity of product states, together with the tensor-product structure of the quantum state space, \emph{already implies} single-system $\psi$-onticity. This conclusion requires neither PIP nor any assumptions about the statistical independence of hidden variables. Correlations—however strong—may exist arbitrarily within the non-$\psi$ degrees of freedom encoded by $\eta$, but they cannot alter the unique specification of the $\psi$-related component of the ontic state.

In summary, the PBR-style disjointness of composite-state epistemic distributions, when expressed in the natural tensor coordinates of the Hilbert space, forces the $\psi$-label of each subsystem to be uniquely fixed by its quantum state. Consequently, $\ket{\psi_1}$ and $\ket{\psi_2}$ are genuine physical properties of their respective subsystems, establishing single-system $\psi$-ontology in a direct and rigorous way.

\section{Limitations of $\psi$-Epistemic Models}
\label{sec:lewis}

Attempts to construct $\psi$-epistemic models, where distinct quantum states share ontic states, have been proposed to challenge the $\psi$-onticity of single-system wave functions. Two notable models, by Lewis et al. \cite{Lewis2012} and Aaronson et al. \cite{Aaronson2013}, aim to permit $\psi$-epistemicity for single systems by relaxing PIP. However, their limitations, particularly in addressing composite systems and entanglement measurements, mean they do not challenge our proof of subsystem $\psi$-ontology, which relies on the PBR theorem’s $\psi$-onticity for composite systems and the tensor product structure of quantum mechanics.

\subsection{Lewis et al. Model}
The Lewis et al. model defines an ontic state space $\Lambda = \mathbb{CP}^{d-1} \times [0,1]$, where $\ket{\lambda} \in \mathbb{CP}^{d-1}$ represents the $\psi$-related part (equivalent to the quantum state space, e.g., $S^2$ for qubits) and $x \in [0,1]$ is a hidden variable. For qubits, a preferred state $\ket{0}$ (north pole on the Bloch sphere) defines a hemisphere $\mathcal{R}_0$ ($\theta_\lambda < \pi/2$) and a subset $\mathcal{E}_0 = \{ (\hat{\lambda}, x) : \hat{\lambda} \in \mathcal{R}_0, 0 \leq x < (1 - \sin \theta_\lambda)/2 \}$, where $\theta_\lambda$ is the angle from $\ket{0}$. The epistemic state for a quantum state $\ket{\psi} \in \mathcal{R}_0$ is given by:
\begin{equation}
\mu_\psi(\hat{\lambda}, x) = \delta(\hat{\lambda} - \hat{\psi}) \Theta\left( x - \frac{1 - \sin \theta_\psi}{2} \right) + \frac{1 - \sin \theta_\psi}{2} \mu_{\mathcal{E}_0}(\hat{\lambda}, x),
\end{equation}
where $\mu_{\mathcal{E}_0}$ is a distribution over $\mathcal{E}_0$. This allows distinct states $\ket{\psi}, \ket{\phi} \in \mathcal{R}_0$ to share ontic states with same $\hat{\lambda} \in \mathcal{R}_0$, achieving $\psi$-epistemicity. The response function, $\xi_{\phi_k}(\hat{\lambda}, x) = \Theta \left[ (|\langle \lambda | \phi_0 \rangle|^2 - x) (-1)^k \right]$, ensures the Born rule for single-system projective measurements, where measurements are ordered relative to $\ket{0}$ (e.g., $|\langle \phi_0 | 0 \rangle|^2 \geq |\langle \phi_1 | 0 \rangle|^2$).

The model is explicitly constructed for single systems, reproducing quantum mechanics’ Born rule for projective measurements on these systems. However, the PBR theorem, which our proof relies upon, leverages quantum mechanics’ predictions for entanglement measurements in composite systems to establish $\psi$-onticity \cite{PBR2012}. The Lewis et al. model does not provide a detailed framework for composite systems or specify response functions for joint measurements, particularly those involving entanglement, such as Bell-basis measurements critical to the PBR theorem.

The PBR theorem considers composite states like $\ket{0} \otimes \ket{0}$, $\ket{0} \otimes \ket{+}$, $\ket{+} \otimes \ket{0}$, and $\ket{+} \otimes \ket{+}$, measured in an entangled basis (e.g., $\frac{1}{\sqrt{2}} (\ket{0} \ket{1} - \ket{1} \ket{0})$). If epistemic distributions overlap, a $\psi$-epistemic model predicts non-zero probabilities for outcomes quantum mechanics assigns zero probability, leading to a contradiction. In the Lewis et al. model, distinct states like $\ket{0}$ and $\ket{+}$ share ontic states in $\mathcal{E}_0$ with the same $\hat{\lambda} \in \mathcal{R}_0$. If extended to a composite system, the epistemic states for $\ket{0} \otimes \ket{0}$ and $\ket{0} \otimes \ket{+}$ could overlap, potentially predicting incorrect probabilities for entangled measurements. Lewis et al. suggest their model can be extended to composite systems by relaxing PIP, allowing correlated ontic states, but they provide no explicit construction demonstrating that it reproduces quantum mechanics’ entanglement measurement predictions.

Our proof relies on the PBR theorem’s $\psi$-onticity for $\psi_1 \otimes \psi_2$, grounded in quantum mechanics’ entanglement measurement predictions, and the tensor product structure to establish subsystem $\psi$-onticity. The Lewis et al. model’s limitation lies in its failure to demonstrate that it can reproduce these predictions for composite systems. Without such a demonstration, it does not challenge the PBR theorem’s conclusion or our proof’s assertion that $\psi_1$ and $\psi_2$ are $\psi$-ontic for their subsystems. 

\subsection{Aaronson et al. Model}
Aaronson et al. propose a maximally nontrivial $\psi$-epistemic model for single quantum systems in any finite dimension $d$, extending the Lewis et al. approach by ensuring nontrivial overlap between epistemic distributions for all non-orthogonal states \cite{Aaronson2013}. This model aims to permit $\psi$-epistemicity by relaxing PIP, but its limitations, particularly its single-system focus and lack of a framework for composite-system entanglement measurements, mean it does not challenge our proof.

The Aaronson et al. model uses an ontic state space $\Lambda = \mathbb{CP}^{d-1} \times [0,1] \times \mathbb{N}$, where $\ket{\lambda} \in \mathbb{CP}^{d-1}$ is the $\psi$-related part, $x \in [0,1]$ is a continuous hidden variable, and an additional discrete variable in $\mathbb{N}$ ensures maximal nontriviality. The epistemic distributions $\mu_{\psi}(\lambda)$ are constructed as a convex combination of measures, allowing non-orthogonal states $\ket{\psi}$ and $\ket{\phi}$ to have overlapping supports with small total variation distance. Unlike symmetric models, where distributions are invariant under unitaries fixing $\ket{\psi}$, this model is explicitly nonsymmetric, leading to a complex, “unnatural” measure on the ontic space. The model reproduces the Born rule for single-system projective measurements, but it does not specify response functions for composite-system measurements, particularly entangled ones.

Like the Lewis et al. model, the Aaronson et al. model is designed for single systems. The PBR theorem relies on quantum mechanics’ predictions for entangled measurements in composite systems, such as Bell-basis measurements on states like $\ket{0} \otimes \ket{0}$ and $\ket{0} \otimes \ket{+}$. The Aaronson et al. model does not provide a framework for such measurements, leaving unclear how joint states are represented or how response functions handle entangled bases. Without demonstrating that it reproduces quantum mechanics’ predictions for composite systems by relaxing PIP, the model does not challenge our proof of $\psi$-ontology for single quantum systems.

\section{Criticisms and Responses}
\label{sec:criticisms}

\subsection{Hidden Variables and Correlations}

\textbf{Criticism:} The proof asserts that the decomposition of the delta distribution enforces \(\psi\)-onticity for subsystems, even in the presence of hidden variables \(\eta\). However, if hidden variables introduce correlations between subsystems, the disjointness of subsystem distributions might not hold. For example, if \(\nu_{\psi_1 \otimes \psi_2}(\eta)\) encodes correlations, the subsystem distributions \(\mu_{\psi_1}(\lambda_1)\) and \(\mu_{\psi_2}(\lambda_2)\) could overlap despite the delta distributions for \(\lambda_{\psi_1}\) and \(\lambda_{\psi_2}\).

\textbf{Response:} The PBR theorem establishes that \(\ket{\psi_1 \otimes \psi_2}\) is \(\psi\)-ontic, with epistemic distribution \(\mu_{\psi_1 \otimes \psi_2}(\lambda) = \delta(\lambda_{\psi} - \psi_1 \otimes \psi_2) \nu_{\psi_1 \otimes \psi_2}(\eta)\) (Section~\ref{sec:proof}). The delta distribution decomposes as \(\delta(\lambda_{\psi} - \psi_1 \otimes \psi_2) = \delta(\lambda_{\psi_1} - \psi_1) \delta(\lambda_{\psi_2} - \psi_2)\), fixing \(\lambda_{\psi_1} = \psi_1\) and \(\lambda_{\psi_2} = \psi_2\). Correlations in \(\eta\), encoded in \(\nu_{\psi_1 \otimes \psi_2}(\eta)\), cannot alter \(\lambda_{\psi_1}\) or \(\lambda_{\psi_2}\), as the delta functions enforce strict equality. Thus, for distinct states \(\ket{\psi_1} \neq \ket{\phi_1}\), $\mu_{\psi_1}(\lambda_{\psi_1}) = \delta(\lambda_{\psi_1} - \psi_1)$ and $\mu_{\psi_2}(\lambda_{\psi_2}) = \delta(\lambda_{\psi_2} - \psi_2)$ have disjoint supports, ensuring \(\psi\)-onticity, as stated in Section~\ref{sec:proof}. Hidden variables affect only \(\eta\), not the \(\psi\)-related parts, preserving disjointness regardless of correlations.

\subsection{Role of PIP}

\textbf{Criticism:} The proof claims to avoid PIP, but the decomposition $\mu_{\psi_1 \otimes \psi_2}(\lambda) = \delta(\lambda_{\psi_1} - \psi_1) \delta(\lambda_{\psi_2} - \psi_2) \nu_{\psi_1 \otimes \psi_2}(\eta)$ seems to implicitly assume a form of independence or separability in the ontic states of the subsystems.

\textbf{Response:} The decomposition is not an assumption of independence but a deduction from two established elements, as clarified in Section~\ref{sec:proof}. First, the PBR theorem establishes that the joint state is $\psi$-ontic, forcing the epistemic distribution into the form $\delta(\lambda_{\psi} - \psi_1 \otimes \psi_2) \nu_{\psi_1 \otimes \psi_2}(\eta)$. Second, for $\lambda_{\psi}$ to be uniquely identified with the product state $\ket{\psi_1 \otimes \psi_2}$, it must structurally encode the assignment of $\psi_1$ to subsystem 1 and $\psi_2$ to subsystem 2. This forces the decomposition $\lambda_{\psi} = (\lambda_{\psi_1}, \lambda_{\psi_2})$ with $\lambda_{\psi_1}=\psi_1$ and $\lambda_{\psi_2}=\psi_2$. 
Our derivation makes no auxiliary assumptions like PIP. The hidden variables $\eta$ in $\nu_{\psi_1 \otimes \psi_2}(\eta)$ can be arbitrarily correlated between subsystems, accounting for any possible non-separability in the full physical state, without affecting the $\psi$-onticity of the subsystems, which is determined solely by the fixed, disjoint $\psi$-related components. 

\subsection{Summary}

Joint $\psi$-onticity of product states, combined with the tensor-product structure of quantum mechanics, mathematically forces each subsystem to carry its own uniquely determined $\psi$-label. 
Hidden-variable correlations remain free but cannot modify the $\psi$-related components of the ontic state. 
Thus single-system $\psi$-ontology follows directly from the PBR theorem and the structure of composite quantum states, with no appeal—explicit or implicit—to PIP. 

\section{Implications for $\psi$-Ontology}
\label{sec:discuss}

Our result challenges a widespread assumption in the foundations of quantum mechanics: that a composite system can be jointly $\psi$-ontic while its individual subsystems remain $\psi$-epistemic, provided hidden-variable correlations between the subsystems are sufficiently strong~\cite{Leifer2014}.

We have shown that this possibility is incompatible with the tensor-product structure of quantum theory once the PBR theorem is taken at face value.  
As soon as product states $\ket{\psi_1 \otimes \psi_2}$ are $\psi$-ontic, the corresponding epistemic distributions necessarily take the form
\begin{equation}
\mu_{\psi_1\otimes\psi_2}(\lambda)
= \delta(\lambda_\psi - \psi_1\otimes\psi_2)\,
\nu_{\psi_1\otimes\psi_2}(\eta),
\end{equation}
where $\lambda=(\lambda_\psi,\eta)$ is a parametrization of the ontic space (see Appendix~\ref{app:representation}). 
Because the quantum state $\psi_1 \otimes \psi_2$ itself factorizes, the label $\lambda_\psi$ inherits the same tensor structure:
\begin{equation}
\lambda_\psi = (\lambda_{\psi_1},\lambda_{\psi_2}),
\end{equation}
and the delta constraint factorizes as
\begin{equation}
\delta(\lambda_\psi - \psi_1\otimes\psi_2)
= \delta(\lambda_{\psi_1} - \psi_1)\,
\delta(\lambda_{\psi_2} - \psi_2).
\end{equation}
Consequently, the marginal epistemic distributions for the individual subsystems are
\begin{equation}
\mu_{\psi_1}(\lambda_{\psi_1}) = \delta(\lambda_{\psi_1} - \psi_1),
\qquad
\mu_{\psi_2}(\lambda_{\psi_2}) = \delta(\lambda_{\psi_2} - \psi_2).
\end{equation}
These supports are manifestly disjoint whenever $\ket{\psi_1} \neq \ket{\phi_1}$ (resp.~$\ket{\psi_2} \neq \ket{\phi_2}$).  
Thus each subsystem is $\psi$-ontic, irrespective of arbitrary correlations that may exist in the residual variables $\eta$.

Subsystem $\psi$-epistemicity is therefore ruled out purely by the combination of (i) the PBR theorem’s conclusion that product states are jointly $\psi$-ontic and (ii) the tensor-product structure of the Hilbert space.  
No PIP and no restriction on hidden-variable correlations are required.

A common source of confusion in the literature has been the belief that non-trivial distributions over the residual variables $\eta$ could “smear out” the sharp encoding of the subsystem states.  
This conflates the full ontic state $\lambda = (\lambda_\psi,\eta)$ with the $\psi$-related component $\lambda_\psi$.  
Our analysis cleanly separates the two: $\lambda_\psi$ is uniquely fixed by the prepared quantum state, while $\eta$ may remain correlated and distributed in any way consistent with quantum predictions.  
Once this distinction is recognised, it becomes clear that the wave function is already sharply encoded in every ontic state, even in the presence of subsystem correlations.

Finally, a brief historical note on the decomposition $\lambda = (\lambda_\psi, \eta)$ is in order. 
Although the idea that a $\psi$-ontic model must contain a sharp quantum-state label is implicit in the original PBR definition of a physical property (see their Fig.~1 and accompanying text) \cite{PBR2012}, the explicit measure-theoretic decomposition of the ontic space into a $\psi$-label and residual degrees of freedom was not emphasised in early discussions.  
Subsequent work tended to focus on the operational role of PIP rather than on the structural consequences of disjoint supports.  
Moreover, identifying the wave function itself (or the ray in projective Hilbert space) as part of the ontic state was sometimes dismissed as a mere notational choice.  
Our proof elevates this identification from notation to necessity: once supports are pairwise disjoint, the decomposition $\lambda=(\lambda_\psi,\eta)$ and the resulting delta-function form are not additional assumptions, but the mathematically natural representation of $\psi$-onticity itself.

\section{Conclusions}

We have presented a minimal and completely general proof that joint $\psi$-onticity of product states, combined with the tensor-product structure of quantum theory, is already sufficient to establish $\psi$-onticity for individual quantum systems. While the PBR theorem is commonly understood, in its standard formulation, as deriving single-system $\psi$-onticity by invoking the Preparation Independence Postulate (PIP), our analysis shows that this further assumption is in fact unnecessary.

Once the ontic state of a composite system sharply encodes the product state $\ket{\psi_1 \otimes \psi_2}$ through a label $\lambda_\psi = \psi_1 \otimes \psi_2$ as required by joint $\psi$-onticity, the tensor factorization of the quantum state forces this label itself to decompose as $\lambda_\psi = \lambda_{\psi_1} \otimes \lambda_{\psi_2}$ with $\lambda_{\psi_1} = \psi_1$ and $\lambda_{\psi_2} = \psi_2$. The resulting marginal epistemic distributions for the subsystems are therefore delta-functions peaked on their respective wave functions, yielding disjoint supports for distinct subsystem states and hence full $\psi$-onticity.

Crucially, the argument never invokes PIP or places any restriction on possible correlations among additional hidden variables. Proposed $\psi$-epistemic models for single systems (e.g., Lewis et al.~\cite{Lewis2012}, Aaronson et al.~\cite{Aaronson2013}) avoid this conclusion only by failing to reproduce the entanglement-measurement statistics required for the PBR theorem to establish joint $\psi$-onticity in the first place. 
Our result thus resolves a longstanding conceptual confusion regarding the role of PIP in $\psi$-ontology theorems. It shows that once joint $\psi$-onticity for product states is granted, the quantum wave function must be regarded as a physical property of individual quantum systems, not merely of composite systems or statistical ensembles.


\appendix
\section{Mathematical Representation Theorem for $\psi$-Ontic Models}
\label{app:representation}\label{app:delta}

This appendix provides the rigorous mathematical foundation for the representation of $\psi$-ontic models used in the main text. 
In particular, we show that whenever epistemic distributions associated with distinct quantum states have disjoint supports, the ontic space admits a reparameterization in which the component identifying the quantum state is represented by a Dirac measure. Thus, the ``delta-function'' representation employed in the main text is not an additional assumption but a direct mathematical consequence of $\psi$-onticity. We also prove that this delta-function form naturally leads to a decomposition of the ontic state into a sharp quantum-state label and residual degrees of freedom. 

\begin{theorem}
\label{thm:delta-rigorous}
Let $(\Lambda, \Sigma)$ be a measurable space and let 
$\Psi = \mathbb{CP}^{n-1}$ be the projective Hilbert space equipped 
with its Borel $\sigma$-algebra $\mathcal{B}(\mathbb{CP}^{n-1})$.
Suppose $\{\mu_\psi\}_{\psi\in\Psi}$ is a family of probability measures on $(\Lambda,\Sigma)$ with pairwise disjoint supports:
\[
\supp(\mu_{\psi}) \cap \supp(\mu_{\phi}) = \varnothing
\qquad\text{for all }\psi\neq\phi.
\]
Then there exists a $\Sigma$-measurable function
\[
f\colon \Lambda \longrightarrow \Psi
\]
such that, for every $\psi\in\Psi$,
\[
\mu_{\psi}\bigl( f^{-1}(\{\psi\}) \bigr) = 1.
\]
Equivalently, one can write
\[
\mu_\psi(\lambda) = \delta\bigl( f(\lambda) - \psi \bigr) \, \nu_\psi(\eta),
\]
and define a decomposition of the ontic state
\[
\lambda = (\lambda_\psi, \eta), \qquad \lambda_\psi := f(\lambda) = \psi,
\]
where $\eta$ parametrizes points within $f^{-1}(\{\psi\})$.
\end{theorem}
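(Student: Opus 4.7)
My plan is to construct $f$ pointwise from the disjointness hypothesis, verify measurability by exploiting the Polish structure of $\Psi = \mathbb{CP}^{n-1}$, and then read off the advertised delta-function form via disintegration. Let $\Lambda_\ast := \bigcup_{\psi \in \Psi} \supp(\mu_\psi)$. Because the supports are pairwise disjoint, each $\lambda \in \Lambda_\ast$ determines a unique $\psi(\lambda) \in \Psi$ with $\lambda \in \supp(\mu_{\psi(\lambda)})$. Set $f(\lambda) := \psi(\lambda)$ on $\Lambda_\ast$ and $f(\lambda) := \psi_0$ off $\Lambda_\ast$, where $\psi_0$ is a fixed reference point. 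Since $\mu_\psi$ is concentrated on $\supp(\mu_\psi) \subseteq f^{-1}(\{\psi\})$, the conclusion $\mu_\psi(f^{-1}(\{\psi\})) = 1$ is immediate from this definition alone.

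The substantive step is to show that $f$ is $\Sigma$-measurable. Because $\Psi$ is a compact Polish space with a countable basis $\{U_k\}$, measurability reduces to checking that $f^{-1}(U_k) \in \Sigma$ for each $k$. A priori $f^{-1}(U_k) = \bigcup_{\psi \in U_k} \supp(\mu_\psi)$ is an uncountable union, and this is the principal technical hurdle. I would handle it by either (i) invoking the Kuratowski–Ryll-Nardzewski measurable selection theorem applied to the closed-valued map $\psi \mapsto \supp(\mu_\psi)$, or (ii) fixing a countable dense $\{\psi_j\} \subseteq U_k$ and using inner regularity plus the fact that disjointness forces each $\supp(\mu_\psi)$ to agree with $\supp(\mu_{\psi_j})$ for all $\psi_j$ in a sufficiently small neighbourhood, thereby replacing the uncountable union by a countable one up to a $\mu_\psi$-null set. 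Either route requires that the family $\{\mu_\psi\}$ depends measurably on $\psi$, which is a standing structural assumption of any well-defined ontological model.

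Once $f$ is known to be measurable, writing $\lambda_\psi := f(\lambda)$ gives $f_\ast \mu_\psi = \delta_\psi$ on $\Psi$. Applying the disintegration theorem on the standard Borel space $\Lambda$ relative to $f$ then produces conditional probability measures $\nu_\psi$ supported on the fibres $f^{-1}(\{\psi\})$ and parametrized by a residual coordinate $\eta$, yielding $\mu_\psi = \delta_\psi \otimes \nu_\psi$ in the coordinates $\lambda = (\lambda_\psi, \eta)$. This is precisely the claimed decomposition; $\eta$ can be taken as the coordinate on the fibre under any standard Borel isomorphism $\Lambda \cong \Psi \times E$ induced by $f$.

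The main obstacle I anticipate is controlling the uncountable union $\bigcup_{\psi \in U} \supp(\mu_\psi)$ rigorously. The theorem as stated nominally requires only a measurable space, but in full generality a measurable selection of this kind is impossible: pathological partitions of $\Lambda$ into uncountably many non-Borel pieces can defeat the construction. I would therefore be explicit about the minimal regularity invoked, namely that $\Lambda$ is standard Borel and that $\psi \mapsto \mu_\psi$ is measurable in the weak sense, and I would lean on a standard measurable-selection theorem to carry the measurability argument cleanly. Everything downstream of measurability, including the delta representation and the fibrewise decomposition, then follows from routine disintegration.
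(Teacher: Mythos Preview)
Your construction of $f$ is identical to the paper's: set $f(\lambda)=\psi$ on $A_\psi:=\supp(\mu_\psi)$ and $f=\psi_0$ on the complement, then read off $\mu_\psi(f^{-1}(\{\psi\}))=1$ and define $\nu_\psi$ as the (trivial) conditional measure on the fibre. Where you diverge is in taking the measurability of $f$ seriously. The paper simply notes that each $f^{-1}(\{\psi\})=A_\psi$ lies in $\Sigma$ and declares $f$ measurable; you correctly observe that this is not enough, since measurability requires $f^{-1}(B)\in\Sigma$ for every Borel $B\subseteq\Psi$, and for open $B$ this is an uncountable union $\bigcup_{\psi\in B}A_\psi$. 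So you have in fact identified a genuine gap in the paper's argument rather than in your own.

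That said, your proposed repairs are uneven. Route~(ii) does not work as stated: pairwise disjointness of the supports in no way forces $\supp(\mu_\psi)$ to ``agree'' with $\supp(\mu_{\psi_j})$ for nearby $\psi_j$; disjoint measurable sets can interleave arbitrarily finely, so the density-plus-stability reduction to a countable union is unjustified without an additional continuity hypothesis on $\psi\mapsto\mu_\psi$ that neither you nor the paper assume. Route~(i) is closer to a genuine fix, but Kuratowski--Ryll-Nardzewski is a selection theorem, whereas what you actually need is that the graph $\{(\psi,\lambda):\lambda\in\supp(\mu_\psi)\}$ is jointly measurable (so that its $\psi$-sections over Borel $B$ project to measurable sets in $\Lambda$); this follows if $\Lambda$ is standard Borel and $\psi\mapsto\mu_\psi$ is Borel, exactly the extra regularity you flag. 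Your downstream use of disintegration is a cleaner and more systematic way to obtain the $(\lambda_\psi,\eta)$ decomposition than the paper's bare restriction of $\mu_\psi$ to the fibre, though the content is the same since the fibre already has full measure.
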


\begin{proof}
Let $A_{\psi} := \supp(\mu_{\psi})\in\Sigma$.  
By the $\psi$-onticity assumption, the sets $\{A_\psi\}_{\psi\in\Psi}$ are pairwise disjoint.   
Define a measurable function $f:\Lambda\to\Psi$ by
\[
f(\lambda) :=
\begin{cases}
\psi  & \text{if }\lambda\in A_\psi \text{ for some }\psi,\\
\psi_0  & \text{if }\lambda\notin \bigcup_\psi A_\psi,
\end{cases}
\]
where $\psi_0 \in \Psi$ is an arbitrary fixed state.  
Since each $A_\psi$ is measurable, the preimage $f^{-1}(\{\psi\}) = A_\psi$ is measurable, so $f$ is $\Sigma$-measurable.  
Now, the complement $\Lambda \setminus A_\psi$ is a $\mu_\psi$-null set by the very definition of the support.  
Hence $\mu_\psi(A_\psi) = 1$ and
\[
\mu_\psi\!\bigl(f^{-1}(\{\psi\})\bigr) = \mu_\psi(A_\psi) = 1.
\]

Next, we define the residual degrees of freedom.  
For each $\psi$, the set $f^{-1}(\{\psi\}) = A_\psi$ contains all ontic states compatible with $\psi$.  
Let $\eta$ denote any measurable parameterization of points in $f^{-1}(\{\psi\})$, i.e., a measurable map $\lambda \mapsto \eta(\lambda)$.  
Define the conditional measure $\nu_\psi$ on $f^{-1}(\{\psi\})$ as the normalized restriction of $\mu_\psi$:
\[
\nu_\psi(B) := \mu_\psi\bigl( B \,\big|\, f^{-1}(\{\psi\}) \bigr),
\qquad B \subseteq f^{-1}(\{\psi\}) \text{ measurable.}
\]
Since $\mu_\psi(f^{-1}(\{\psi\})) = 1$, this conditional measure satisfies
\[
\nu_\psi\bigl(f^{-1}(\{\psi\})\bigr) = 1.
\]
Thus, for $\mu_\psi$-almost every $\lambda$, we may write
\[
\lambda = (\lambda_\psi, \eta), \qquad \lambda_\psi := f(\lambda) = \psi,
\]
where $\eta$ labels the residual degrees of freedom inside the fiber $f^{-1}(\{\psi\})$. 
With this normalization, the delta-function representation follows:
\[
\mu_\psi(\lambda_\psi, \eta) = \delta(\lambda_\psi - \psi)\, \nu_\psi(\eta),
\]
which is exactly the representation used in the main text.

Hence, any $\psi$-ontic model can be represented in the form of a delta function over the $\psi$-component together with a normalized conditional measure over the residual variables, making the structure of the ontic space explicit without adding extra assumptions.
\end{proof}

\begin{corollary}[Joint $\psi$-onticity for product states]
The PBR theorem establishes that, in any ontological model reproducing the predictions of quantum mechanics, the family of preparation measures
$\{\mu_{\psi_1 \otimes \psi_2}\}_{\psi_1 \in \Psi_1,\; \psi_2 \in \Psi_2}$ corresponding to all product states $|\psi_1\rangle \otimes |\psi_2\rangle$ has pairwise disjoint supports in the composite ontic space $\Lambda$. 
Applying Theorem~\ref{thm:delta-rigorous} to this family, there exists a measurable function
\[
f : \Lambda \longrightarrow \Psi_{12} \equiv \mathbb{CP}^{(d_1 d_2)-1},
\]
where $\Psi_{12}$ is the projective Hilbert space of the composite system, such that, for every product state $|\psi_1\rangle \otimes |\psi_2\rangle$,
\[
\mu_{\psi_1 \otimes \psi_2}\bigl(f^{-1}(\{\psi_1 \otimes \psi_2\})\bigr) = 1.
\]
Equivalently, up to $\mu_{\psi_1 \otimes \psi_2}$-null sets,
\[
\mu_{\psi_1 \otimes \psi_2}(d\lambda)
= \delta\bigl(f(\lambda) - \psi_1 \otimes \psi_2\bigr)\,
  \nu_{\psi_1 \otimes \psi_2}(d\eta),
\]
or, defining $\lambda_\psi := f(\lambda)$,
\[
\lambda = (\lambda_\psi, \eta), 
\qquad
\lambda_\psi = \psi_1 \otimes \psi_2 \quad (\text{$\mu_{\psi_1 \otimes \psi_2}$-almost everywhere}).
\]
Thus, every ontic state $\lambda$ of the composite system carries a sharp, uniquely determined label $\lambda_\psi$ that exactly identifies the prepared product state. This provides a precise structural foundation for the subsequent derivation of subsystem $\psi$-onticity: the decomposition of $\lambda_\psi$ into subsystem components follows directly from the tensor-product form of $\psi_1 \otimes \psi_2$.
\end{corollary}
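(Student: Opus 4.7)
The plan is to obtain the corollary as a direct specialization of Theorem~\ref{thm:delta-rigorous} to the family of preparation measures indexed by product states. The conceptual structure is straightforward: the PBR theorem supplies the disjoint-supports hypothesis for product preparations in the composite ontic space, and Theorem~\ref{thm:delta-rigorous} then converts this hypothesis into the stated delta-function representation with no further work.

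First I would isolate the input provided by PBR. In any ontological model reproducing the quantum predictions for the PBR measurement protocol, distinct product states cannot share ontic support: for every pair $\ket{\psi_1 \otimes \psi_2} \neq \ket{\phi_1 \otimes \phi_2}$, one has $\supp(\mu_{\psi_1 \otimes \psi_2}) \cap \supp(\mu_{\phi_1 \otimes \phi_2}) = \varnothing$ in the composite ontic space $(\Lambda, \Sigma)$. This is precisely the pairwise-disjointness hypothesis Theorem~\ref{thm:delta-rigorous} requires, only now the index set is the collection of product states sitting inside $\Psi_{12} \equiv \mathbb{CP}^{d_1 d_2 - 1}$.

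Next I would apply Theorem~\ref{thm:delta-rigorous} with $\Psi = \Psi_{12}$ and the measurable space taken to be the composite $(\Lambda, \Sigma)$. The construction in Theorem~\ref{thm:delta-rigorous} defines $f$ by setting $f(\lambda) := \psi$ on $\supp(\mu_\psi)$ and $f(\lambda) := \psi_0$ on the complement of the union of supports; this construction goes through verbatim using only the supports associated with product preparations. Measurability of $f$ follows from the measurability of each individual support, and the key conclusion $\mu_{\psi_1 \otimes \psi_2}(f^{-1}(\{\psi_1 \otimes \psi_2\})) = 1$ is immediate from the fact that any probability measure assigns zero mass to the complement of its support. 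The delta-function representation and the decomposition $\lambda = (\lambda_\psi, \eta)$ with $\lambda_\psi = \psi_1 \otimes \psi_2$ almost surely then drop out by exactly the conditioning argument used in the main theorem.

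The only delicate point worth flagging is that the natural index set here—the Segre subvariety of product states inside $\mathbb{CP}^{d_1 d_2 - 1}$—is a proper subset of $\Psi_{12}$, whereas Theorem~\ref{thm:delta-rigorous} is stated for families indexed by all of $\Psi$. I would observe that the theorem's proof never uses that the family exhausts $\Psi$; it uses only disjointness among the supports actually present and the availability of $\Psi$ as a codomain for the labels $f(\lambda)$. So no modification of the theorem is needed, and there is no substantive obstacle: the real mathematical content lies in Theorem~\ref{thm:delta-rigorous} itself, while the corollary is a clean specialization that packages PBR's disjointness result into the tensor-coordinate form required for the subsequent derivation of single-system $\psi$-onticity.
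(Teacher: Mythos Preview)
Your proposal is correct and matches the paper's approach exactly: the corollary has no separate proof in the paper, being presented as an immediate application of Theorem~\ref{thm:delta-rigorous} once PBR supplies the disjoint-supports hypothesis for product preparations. Your observation that the index set of product states is only the Segre subvariety inside $\mathbb{CP}^{d_1 d_2 - 1}$, and that the theorem's construction never uses exhaustiveness of the index set, is a point of rigor the paper itself does not explicitly address.
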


The physical meaning of the delta-function representation is as follows. In any $\psi$-ontic model, the probability measures associated with distinct quantum states have disjoint supports in the ontic space $\Lambda$. This disjointness is the operational content of $\psi$-onticity: no ontic state can be compatible with more than one quantum state. Once the supports are disjoint, the ontic space can be reorganized by defining a measurable map $f:\Lambda \to \Psi$ that assigns to each ontic state the unique quantum state $\psi$ whose support contains it. By definition, $f(\lambda)$ is uniquely determined by $\lambda$ and varies nontrivially across different preparations; it can thus be interpreted as an ontic variable. One can then introduce coordinates $(\lambda_\psi, \eta)$ on $\Lambda$ by setting $\lambda_\psi := f(\lambda)$ to label the quantum state and letting $\eta$ parametrize the residual degrees of freedom within the fiber $f^{-1}(\psi)$. In these coordinates, the preparation measure takes the delta-function form $\mu_\psi(\lambda_\psi, \eta) = \delta(\lambda_\psi - \psi)\, \nu_\psi(\eta)$, showing that the ontic state contains a sharp component $\lambda_\psi = \psi$ that uniquely identifies the prepared quantum state, while $\eta$ accounts for additional ontic degrees of freedom. This construction is a direct consequence of the disjointness of supports and the definition of the ontological-model framework; it requires no assumptions beyond $\psi$-onticity.


\begin{thebibliography}{9}

\bibitem{Author2017}
S. Gao, \textit{The Meaning of the Wave Function: In Search of the Ontology of Quantum Mechanics}, Cambridge University Press, 2017.

\bibitem{Spekkens2010} N. Harrigan and R. Spekkens, ``Einstein, incompleteness, and the epistemic view of quantum states." \textit{Foundations of Physics} 40, 125-157, 2010. 

\bibitem{PBR2012}
M. F. Pusey, J. Barrett, and T. Rudolph, “On the reality of the quantum state,” \textit{Nature Physics}, vol. 8, pp. 475–478, 2012.

\bibitem{Lewis2012}
P. G. Lewis, D. Jennings, J. Barrett, and T. Rudolph, “Distinct quantum states can be compatible with a single state of reality,” \textit{Physical Review Letters}, 109, 150404, 2012.

\bibitem{Aaronson2013}
S. Aaronson, A. Bouland, L. Chua, and G. Lowther, “$\psi$-epistemic theories: The role of symmetry,” \textit{Physical Review A}, vol. 88, p. 032111, 2013.

\bibitem{Leifer2014}
M. S. Leifer, “Is the quantum state real? An extended review of $\psi$-ontology theorems,” \textit{Quanta}, vol. 3, pp. 67–155, 2014.

\bibitem{Schlosshauer2012}
M. Schlosshauer and A. Fine, ``Implications of the Pusey–Barrett–Rudolph quantum no-go theorem." \textit{Physical Review Letters}, 108, 260404, 2012. 

\bibitem{Schlosshauer2014}
M. Schlosshauer and A. Fine, ``No-go theorem for the composition of quantum systems," \textit{Physical Review Letters}, 112, 070407, 2014. 


\end{thebibliography}
\end{document}